	\newtheorem{theorem}{Theorem}[section]
	\newtheorem{lemma}[theorem]{Lemma}
	\theoremstyle{definition}
	\newtheorem{corollary}[theorem]{Corollary}
	\newtheorem{claim}{Claim}[theorem]
	\newtheorem{example}{Example}
	\theoremstyle{remark}
	\title{On outer-connected domination for graph products}
		\author{ M. Hashemipour$^1$,  M. R. Hooshmandasl$^2$, A. Shakiba$^3$ \\
			\footnotesize{$^{1,2}$Department of Computer Science, Yazd University, Yazd, Iran. }  \\
			\footnotesize{$^{1,2,3}$The Laboratory of Quantum Information Processing, Yazd University, Yazd, Iran.}  \\
			\footnotesize{$^{3}$Department of Computer Science, Vali-e-Asr university of Rafsanjan}\\
			\footnotesize{e-mail: $^1$mhashemi@stu.yazd.ac.ir, $^2$hooshmandasl@yazd.ac.ir, $^3$ali.shakiba@vru.ac.ir.
			}}
	\date{}
\begin{document}
			\maketitle
	\begin{abstract}
		An \emph{outer-connected dominating set} for an arbitrary graph $G$ is a set $\tilde{D} \subseteq V$
		   such that
		    $\tilde{D}$ is a dominating set
		     and  the induced subgraph $G [V \setminus \tilde{D}]$ be connected. 
				In this paper, we focus on the outer-connected domination number of the product of graphs.
					We investigate the existence of outer-connected dominating set in lexicographic  product and Corona of two arbitrary graphs, 
					 and we present upper bounds for outer-connected domination number in lexicographic and Cartesian product of graphs. Also, we establish an equivalent form of the Vizing's conjecture for outer-connected domination number in lexicographic and Cartesian product as  $\tilde{\gamma_c}(G \circ K)\tilde{\gamma_c}(H \circ K) \leq \tilde{\gamma_c}(G\Box H)\circ K$.
		 Furthermore,	we study the outer-connected domination number of the direct product of finitely many complete graphs. \\
		 \textbf{Keywords:} Outer-connected domination; Cartesian product; Lexicographic  product;  Corona product; Direct product; Vizing's conjecture.

	\end{abstract}	
	
		\section{Introduction \& preliminary}

 Domination and its variations in graphs are a well studied topic in the literature, e.g.\cite{haynes1998fundamentals} gives a survey on the topic.  The concept of outer-connected domination number, as a variant of graph domination problem, is introduced by Cyman \cite{cyman2007outer1connected} and is further studied by others in \cite{akhbari2013outer,keil2013computing}. The outer-connected domination problem is $NP$-complete for
arbitrary graphs \cite{cyman2007outer1connected}.
A set $\tilde{D} \subseteq V$ of a graph $G=(V,E) $ is called an \emph{outer-connected dominating set} for $G$ if (1) $\tilde{D}$ is a dominating set for $G$, and (2)  $G [V \setminus \tilde{D}]$, the induced subgraph of $G$ by $V \setminus \tilde{D}$, is connected. The minimum size among all outer-connected dominating sets of $G$ is called the \emph{outer-connected domination number} of $G$ and is denoted by $\tilde{\gamma}_c(G)$\cite{cyman2007outer1connected}. 

The problem of finding a minimum sized outer connected dominating set has applications in computer networks.
For example consider a client-server architecture based network in which any client must be able to communicate to one of the servers. Since overload of the servers is a bottleneck
in such a network, every client must be able to communicate to another client
directly without interrupting any  server.  The smallest group of servers
with these properties is a minimum outer-connected dominating set for the
graph representing the computer network\cite{panda2014algorithm}.
\\
 We make use the following result related to the outer-connected
 domination number in this paper.
 
 \begin{theorem}\cite{cyman2007outer1connected}
 	If $G$ is a connected graph of order $n$, then 	
 	\begin{equation*}
 	\tilde{\gamma_c} (G) \le n- \delta(G).
 	\end{equation*}	
 \end{theorem}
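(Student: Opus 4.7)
The plan is to produce an outer-connected dominating set $\tilde{D}$ of size $n - \delta(G)$ by specifying its complement $S = V \setminus \tilde{D}$. Setting $\delta = \delta(G)$, the goal is to choose $S \subseteq V$ with $|S| = \delta$, with $G[S]$ connected and every vertex of $S$ having at least one neighbor in $\tilde{D}$.

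The construction I would use is greedy. First pick any $u_1 \in V$, and then for $k = 2, 3, \ldots, \delta$ select $u_k \in V \setminus \{u_1, \ldots, u_{k-1}\}$ that is adjacent in $G$ to some previously chosen vertex. Such a $u_k$ exists because $G$ is connected and $\{u_1, \ldots, u_{k-1}\}$ is a proper subset of $V$ (its size $k-1$ is strictly less than $\delta \le n - 1$), so at least one edge of $G$ must cross the cut. Take $S = \{u_1, \ldots, u_\delta\}$.

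Next I would verify the two required properties of $S$. Connectivity of $G[S]$ is immediate from the greedy rule, since each new $u_k$ attaches to an earlier vertex, essentially building a spanning tree of $G[S]$. For the dominating-set property of $\tilde{D}=V\setminus S$, the argument is a simple pigeonhole: each $u_i$ has at least $\delta$ neighbors in $G$, but $S \setminus \{u_i\}$ contains only $\delta - 1$ vertices, so $u_i$ must have at least one neighbor in $\tilde{D}$.

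I do not foresee a significant obstacle: connectedness of $G$ is used precisely to justify the greedy step, and the domination check works exactly because $|S|$ is chosen to equal $\delta$ rather than anything larger. One natural-looking shortcut — taking $\tilde{D} = V \setminus N(v)$ for a minimum-degree vertex $v$ — fails in general because $G[N(v)]$ need not be connected (e.g. in $C_4$), so the greedy construction above is what I would commit to. The degenerate case $\delta = 0$, which forces $G = K_1$, is handled separately by inspection.
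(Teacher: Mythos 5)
Your proof is correct. Note that the paper itself gives no proof of this statement --- it is quoted from Cyman's paper \cite{cyman2007outer1connected} as a known result --- so there is nothing internal to compare against; your argument (greedily grow a connected set $S$ of $\delta(G)$ vertices, observe by pigeonhole that each vertex of $S$ keeps a neighbour in $V\setminus S$, and take $\tilde{D}=V\setminus S$) is the standard and complete way to establish the bound, and your handling of the greedy step's feasibility via connectedness and of the degenerate case $\delta(G)=0$ is sound.
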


Graphs are basic combinatorial structures and products of structures are a
fundamental construction in graph theory. Such construction is a challenging problem and has many applications.
 In graph theory there are three
fundamental graph  products,  namely the Cartesian product, the direct product, and the strong product, 
 each with its own set of applications and theoretical interpretations.
 Computer science is one of the
many fields in which graph products are becoming commonplace. As one specific example,
one can mention the load balancing problem for massively parallel computer architectures\cite{hammack2011handbook}. In addition to large
networks such as the graph of the which has Internet several hundred million hosts, can be efficiently
modeled by subgraphs of powers of small graphs with respect to the direct product. This
is one of the many examples of the dichotomy between the structure of products and that of
their subgraphs\cite{hammack2011handbook}. The classification also leads to a other two products worthy of special attention, the
  lexicographic and the Corona products\cite{hammack2011handbook}.

  The study of domination number in product graphs has a long history. Back in 1963, Vizing \cite{vizing1963cartesian} posed a conjecture, which
  is  main open problem in
  graph domination, concerning the domination number of the Cartesian product graphs
 \begin{equation*}
  		    \gamma(G)\gamma(H) \le \gamma(G\Box H).
 		\end{equation*}
  For a survey of domination in Cartesian products,
 an interested reader can consult\cite{hartnell1998domination} for more information. 
  Gravier and Khelladi \cite{gravier1995domination} posed an analogous conjecture for direct product graphs, namely
 \begin{equation*}
 \gamma(G)\gamma(H) \le \gamma(G\times H).
 \end{equation*}

 Domination number of direct products of certain graphs has exact values, for instance, 
 the products of two paths, the product of a path and a complement of a path, the product of K2 and a tree,
 bipartite graph and an odd cycle \cite{gravier1995domination,klobuvcar1999domination,klobuvcar1999domination1}.
 In 2010, Gasper Mekis\cite{mekivs2010lower} gave a lower bound for the domination number of a direct product  and proved that this bound is sharp. Also, he studied 
 the domination number of the direct product of finitely many complete graphs.

 For the lexicographic product graphs, various types of domination were investigated
in the literature, including domination \cite{vsumenjak2012roman,nowakowski1996associative}, 
 total domination \cite{zhang2011domination}, rainbow domination \cite{vsumenjak2013rainbow}, Roman domination \cite{vsumenjak2012roman}, and restrained domination \cite{zhang2011domination}.  
  Other various types of dominating sets for products
  of graphs were intensively investigated in \cite{henning2013total,yero2014efficient}. However, outer-connected dominating sets for products 
  of graphs has not been investigated. So,   we investigate the topic  in this paper by studying the outer-connected dominating  sets and related notions
  in the lexicographic product, the direct product, the Cartesian product and the Corona product graphs.

  For notation and graph theory terminology, we in general follow \cite{haynes1998fundamentals}.	Let $G = (V,E)$ be a graph with vertex set $V$ and edge set $E$ of  order $|V|$ denoted by $n$ and size $|E|$ denoted by $m$.  We also use $V(G)$ and $E(G)$ to denote the vertex set and 
  edge set for a graph $G$.
  Let  $v$ be a vertex in $V.$ The open neighborhood of $v$ is denoted by $N_G(v)$ and is defined as $ \{u \in V : \{u,v\} \in
  E(G)\}$. Similarly, the closed neighborhood of $v$ is denoted by $N_G[v]$ and is defined as $\{v\} \cup N_G(v)$.  Whenever the graph $G$
  is clear from the context, we simply write $N(v)$ to denote $N_G(v)$.
  For a	set $S \subseteq V$, its open neighborhood is the set $N(S) = \cup_{v\in S} N(v)$ and its closed
  neighborhood is the set $N[S] = N(S) \cup S$. A subset $S \subseteq V$ is a \emph{dominating set} of
  $G$ if every vertex not in $S$ is adjacent to a vertex in $S$. The \emph{domination number}
  of $G$, denoted by $\gamma(G)$, is the minimum cardinality among all dominating sets of $G$. A
  dominating set $S$ is called a $\gamma(G)-set$ of $G$ if $|S| =\gamma(G)$. A dominating set $S$ in a graph with no isolated vertex is called a \emph{total dominating set } if 
  the induced subgraph $G[S]$ has no isolated vertex. The \emph{total domination number}
  of $G$, denoted by $\gamma_t(G)$, is the minimum cardinality among all total dominating sets of $G$.
  A total dominating set $S$ is called  a $\gamma_t(G)-$set of $G$ if $|S| = \gamma_t(G)$.

   We in general follow the product of graphs in \cite{hammack2011handbook}.
  The lexicographic product of two graphs G and H, denoted by $G \circ H$, is the
   	graph with vertex set equal to $ V (G) \times V (H)$ such that two vertices $(u_1, u_2)$ and $(v_1, v_2)$ are connected by an edge if either
   	$\{u_1,v_1\} \in E(G)$ or $u_1 = v_1$ and $\{u_2,v_2\}\in E(H)$.
  The Corona product of two graphs G and H, denoted by $G \circ_c H$, is the graph obtained
  	by taking one copy of G  and n copies of H, where G has n vertices, and joining the
  	$i^{th}$ vertex of G to every vertex in the 	$i^{th}$ copy of H. For every $x \in V (G)$, we
  	denote the copy of H whose vertices are attached to the vertex
  	$x$ in G  by $H^x$.
   The Cartesian product of two graphs $G$ and $H$, denoted by $G \Box H$, is the
   graph with vertex set equal to $ V (G) \times V (H)$ such that two vertices $(u_1, u_2)$ and $(v_1, v_2)$ are connected by an edge if either
   $\{u_1,v_1\} \in E(G)$ and $u_2 = v_2$, or $u_1 = v_1$ and $\{u_2,v_2\}\in E(H)$.   	 	
   		For graphs G and H, the direct product denoted by $G \times H$ (also known as the tensor product, cross product, cardinal product  and 	categorical product), is the graph with vertex set equal to $V (G) \times V (H)$ such that two vertices $(u_1, u_2)$ and $(v_1,v_2)$ are connected by an edge
   		if and only if $\{u_1,v_1\} \in E(G)$ and $\{u_2,v_2\} \in E(H)$.
   		
The rest of the paper is organized as follows: In Section 2, we characterize the outer-connected domination of  lexicographic
products of graphs by constructing minimum sized ones. In Section 3,  we investigate the outer-connected domination of Corona
products of graphs. In Section 4,  an upper bound for the outer-connected domination number in the Cartesian product graphs is defined. Finally, in Section 5, we study 
the outer-connected domination number of the direct product of finitely many complete graphs.

\section  {Outer Connected Domination in the Lexicographic
	Product of Two Connected Graphs}

	
    In this section we must determine the outer-connected domination number in the lexicographic product of two graphs.  To this aim, we first prove the following three
   lemmas.

\begin{lemma}
	\label{th2}
	Let G and H be two graphs. Then, $\gamma(G) \le \gamma(G \circ H)$.
\end{lemma}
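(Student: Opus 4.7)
The plan is to show that any dominating set of $G \circ H$ projects to a dominating set of $G$ of no larger size, yielding the inequality directly.

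First, I would let $\tilde{D}$ be a minimum dominating set of $G \circ H$, so $|\tilde{D}| = \gamma(G \circ H)$. I would then define the projection onto the first coordinate,
\[
D_G = \{ u \in V(G) : (u,v) \in \tilde{D} \text{ for some } v \in V(H) \},
\]
and immediately note that $|D_G| \le |\tilde{D}|$.

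Next, I would verify that $D_G$ is a dominating set of $G$. For any $u \in V(G) \setminus D_G$, pick an arbitrary $v \in V(H)$; then $(u,v) \notin \tilde{D}$, so there exists $(u',v') \in \tilde{D}$ adjacent to $(u,v)$ in $G \circ H$. By the definition of the lexicographic product, either $\{u,u'\} \in E(G)$, or $u = u'$ with $\{v,v'\} \in E(H)$. The second case would force $u \in D_G$, contradicting the choice of $u$, so $\{u,u'\} \in E(G)$ with $u' \in D_G$. Hence $u$ is dominated by $D_G$.

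Combining the two observations gives $\gamma(G) \le |D_G| \le |\tilde{D}| = \gamma(G \circ H)$. There is no real obstacle here; the only thing to be careful about is the case $u = u'$ in the adjacency analysis, which is resolved by choosing $u$ outside $D_G$.
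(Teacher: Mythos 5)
Your proof is correct and follows essentially the same route as the paper: project a minimum dominating set of $G \circ H$ onto the first coordinate and check that the projection dominates $G$. Your handling of the case $u = u'$ (by restricting attention to $u \notin D_G$) is in fact a slightly cleaner justification than the paper's ``in both cases'' remark, but the argument is the same.
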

\begin{proof}
	Let $T = \set{(x_1,y_1),(x_2,y_2),\cdots, (x_k,y_k)}$ be the dominating set for $G \circ H$. Then for every vertex $(a,b)\in G \circ H, $ there exists a vertex $(x_i,y_i) \in T$ such that $\set{(x_i,y_i),(a,b)}\in E(G \circ H)$. That means either  $\set{x_i,a }\in E(G)$ or  $x_i=a$ and $ \set{y_i,b} \in E(H)$. 	In both cases, the set $T^\prime = \set{x_i \mid (x_i,y_i) \in T}$ is a dominating set for $G$ and its cardinality is less than or equal  to the cardinality of the set $T$.
	
\end{proof}

\begin{lemma}
	\label{th3}
	Let $\gamma(H) \neq 1$ and $T$ is a dominating set for $G \circ H$  and $(x_i,y_i) \in T$. then there exists at least one vertex  of the form $(x_j,v) \in T$ 
	such that either $\set{x_i,x_j} \in E(G)$  or $x_i = x_j$ and $v \neq y_i$.
\end{lemma}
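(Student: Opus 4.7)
The plan is to exploit the hypothesis $\gamma(H)\neq 1$ to produce a vertex of $H$ that $y_i$ fails to dominate, and then examine how the dominating set $T$ must cover the corresponding vertex of $G\circ H$. Specifically, since $\gamma(H)\geq 2$, no single vertex of $H$ is a dominating set, so there exists $y'\in V(H)$ with $y'\neq y_i$ and $\{y_i,y'\}\notin E(H)$. Consider the vertex $(x_i,y')\in V(G\circ H)$; I will argue that whatever mechanism $T$ uses to dominate this vertex immediately supplies the required $(x_j,v)\in T$.

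I would then split into cases. If $(x_i,y')\in T$, take $(x_j,v)=(x_i,y')$: then $x_j=x_i$ and $v=y'\neq y_i$, as required. Otherwise there must exist $(x_j,v)\in T$ distinct from $(x_i,y')$ that dominates $(x_i,y')$ in $G\circ H$. By the edge rule of the lexicographic product, either $\{x_i,x_j\}\in E(G)$, which directly matches the first alternative of the conclusion, or else $x_j=x_i$ and $\{v,y'\}\in E(H)$.

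The main (and only) subtle step is verifying $v\neq y_i$ in this last subcase, because that is the only place where the hypothesis on $\gamma(H)$ is really used. Here the chosen witness $y'$ does the work: by construction $\{y_i,y'\}\notin E(H)$, whereas $\{v,y'\}\in E(H)$, so $v\neq y_i$, completing the second alternative. Thus every case delivers a vertex $(x_j,v)\in T$ of the desired form, and this essentially finishes the proof. I do not anticipate serious technical obstacles; the argument is a clean, direct application of the definitions, and the only thing to be careful about is to ensure the produced vertex is genuinely different from $(x_i,y_i)$, which is guaranteed in each case by the choice of $y'$.
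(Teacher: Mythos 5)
Your proof is correct and follows essentially the same route as the paper: both use $\gamma(H)\neq 1$ to produce a vertex $(x_i,y')$ not dominated by $(x_i,y_i)$ and then read off the required $(x_j,v)$ from whichever vertex of $T$ covers it. Your write-up is actually more careful than the paper's (which omits the subcase $(x_i,y')\in T$ and the verification that $v\neq y_i$), but the underlying idea is identical.
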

\begin{proof}
	It is clear that the vertex  $(x_i,y_i) \in T$ cannot cover all of the vertices $(x_i,u) \in V(G \circ H)$ because  $\gamma(H) \neq 1$. So, the  vertex $(x_i,u)$ is covered by the vertex $(x_j,v) \in T$ such that either $\set{x_i,x_j} \in E(G)$  or $x_i = x_j$ and $v \neq y_i$.
\end{proof}

\begin{lemma}
	\label{th4}
	
	Let G and H be two graphs such that $\gamma(H)\neq 1$. Then, there exists  a  total dominating set for G with   cardinality less than or equal to $\gamma(G \circ H)$. 
\end{lemma}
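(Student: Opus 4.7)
The plan is to start with a minimum dominating set $D$ of $G\circ H$ with $|D|=\tilde{\gamma}_c$... wait, we want $\gamma(G\circ H)$, so take $D$ with $|D|=\gamma(G\circ H)$. Let $T'=\{x\in V(G)\mid (x,y)\in D\text{ for some }y\}$ be the projection of $D$ onto the first coordinate. By (the proof of) Lemma~\ref{th2}, $T'$ is a dominating set of $G$ and clearly $|T'|\le|D|$. The set $T'$ need not be a \emph{total} dominating set, so the work is to modify it into one without inflating its cardinality beyond $|D|$.

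The key leverage is Lemma~\ref{th3}. Let $I\subseteq T'$ be the set of vertices that are isolated in the induced subgraph $G[T']$. I claim that for every $x\in I$, the set $D$ contains at least two vertices with first coordinate equal to $x$. Indeed, fix any $(x,y)\in D$; Lemma~\ref{th3} produces some $(x_j,v)\in D$ with either $\{x,x_j\}\in E(G)$ or $x_j=x$ and $v\neq y$. The first alternative would place a graph neighbor of $x$ into $T'$, contradicting $x\in I$; hence the second alternative holds and we find a second vertex $(x,v)\in D$ with $v\neq y$. Summing the contributions to $|D|$ yields
\begin{equation*}
|D|\;=\;\sum_{x\in T'}\#\{(x,y)\in D\}\;\ge\;(|T'|-|I|)\cdot 1+|I|\cdot 2\;=\;|T'|+|I|.
\end{equation*}

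Now I build the total dominating set. Assuming $G$ has no isolated vertex (otherwise $G$ admits no total dominating set and the statement is vacuous in the usual convention), for each $x\in I$ choose some $x'\in N_G(x)$. Such an $x'$ cannot lie in $T'$, for then $x'$ would witness that $x$ is not isolated in $G[T']$. Set
\begin{equation*}
T^{\ast}\;=\;T'\cup\{x'\mid x\in I\}.
\end{equation*}
Then $T^{\ast}$ still dominates $G$ (since $T'$ does), and every vertex of $T^{\ast}$ has a neighbor in $T^{\ast}$: a vertex of $T'\setminus I$ has a neighbor already in $T'$ by definition of $I$; a vertex $x\in I$ is adjacent to the corresponding $x'\in T^{\ast}$; and each newly added $x'$ is adjacent to its $x\in I\subseteq T^{\ast}$. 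Hence $T^{\ast}$ is a total dominating set of $G$, and
\begin{equation*}
|T^{\ast}|\;\le\;|T'|+|I|\;\le\;|D|\;=\;\gamma(G\circ H),
\end{equation*}
which proves the lemma.

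The only subtle step is the counting argument that forces at least two vertices of $D$ to share a first coordinate $x\in I$; everything else is a routine verification. That step is where Lemma~\ref{th3} (and thereby the hypothesis $\gamma(H)\neq 1$) is genuinely used, and it provides exactly the ``budget'' of $|I|$ extra slots in $D$ needed to pay for the $|I|$ neighbors appended when turning $T'$ into $T^{\ast}$.
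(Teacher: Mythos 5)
Your proof is correct and follows essentially the same route as the paper: project a minimum dominating set of $G\circ H$ onto $G$ via Lemma~\ref{th2}, then use Lemma~\ref{th3} to append neighbors and obtain a total dominating set without exceeding $\gamma(G\circ H)$. In fact your counting argument --- isolating the set $I$ and showing each $x\in I$ contributes at least two elements to $D$, so that $|D|\ge |T'|+|I|$ pays for the added neighbors --- makes rigorous a step the paper's own proof only asserts.
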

\begin{proof}
	According to Lemma \ref{th2}, the set $T^\prime $ is a dominating set for G. On the other hand, we have $\gamma(H)\neq 1$, so by Lemma \ref{th3}, for all vertices $(x_i,v) \in  T$, there exists at least one node with the necessary conditions. Then,  for all vertices $x_i \in T^\prime$, we also add one of its neighbors to $T^\prime$ which would be a total dominating set for $G$ of cardinality less than or equal to $|T|$. So, $\gamma_t(G) \le \gamma(G \circ H).$
	\end {proof}


		\begin{theorem}
			\label{th5}
			Suppose that $G$ and $H$ are two connected graphs. Then, we have
		
		\begin{equation}
	\tilde {\gamma_c}(G \circ H) = \begin{cases} 1 & \text{if} ~~ \gamma(G)=1 ~~ \text{and} ~~ \gamma(H)=1, \\ 	2 & \text{if} ~~ \gamma(G) = 1 ~~ \text{and} ~~ \gamma(H) \neq 1, \\ 	\gamma(G) & \text{if} ~~ \gamma(G) \neq 1 ~~ \text{and}  ~~ \gamma(H)=1, \\	\gamma_t(G) & \text{if} ~~ \gamma(G) \neq 1 ~~ \text{and}  ~~ \gamma(H)\neq 1 . \end{cases}	
	\end{equation}
	
\end{theorem}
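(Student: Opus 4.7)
The plan is to split according to the four cases of the statement and supply a matching upper and lower bound in each, relying on the already-proved Lemmas \ref{th2} and \ref{th4} for the lower bounds and explicit constructions for the upper bounds.

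For Case 1 ($\gamma(G)=\gamma(H)=1$), I would fix universal vertices $x\in V(G)$ and $y\in V(H)$ and check that $\{(x,y)\}$ is already outer-connected dominating. Every other vertex is dominated either through the $G$-edge $\{x,a\}$ (if $a\neq x$) or the $H$-edge $\{y,b\}$ (if $a=x$, $b\neq y$), and any two vertices of $V\setminus\{(x,y)\}$ are linked through any $(a',b')$ with $a'\neq x$, using that $x$ is adjacent in $G$ to everything else.

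For Case 2 ($\gamma(G)=1$, $\gamma(H)\neq 1$), the lower bound $\tilde\gamma_c\ge 2$ is immediate: a single dominator $(x,y)$ would have to dominate every $(x,b)$, forcing $y$ to be universal in $H$ and contradicting $\gamma(H)\ne 1$. For the upper bound I would fix the universal vertex $x\in V(G)$, any other vertex $x'\in V(G)$, any $y^*\in V(H)$, and take $D=\{(x,y^*),(x',y^*)\}$. Domination follows because $x$ universal in $G$ covers every vertex whose first coordinate is neither $x$ nor $x'$, and because the $G$-edge $\{x,x'\}$ covers the two columns above $x$ and $x'$. For Cases 3 and 4 the lower bounds come directly from the earlier lemmas: $\tilde\gamma_c(G\circ H)\ge\gamma(G\circ H)\ge\gamma(G)$ in general, upgraded to $\ge\gamma_t(G)$ when $\gamma(H)\ne 1$ via Lemma \ref{th4}. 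For the upper bounds I would fix $y^*\in V(H)$ (chosen universal in Case 3) and set $D=S\times\{y^*\}$, with $S$ a $\gamma(G)$-set in Case 3 and a $\gamma_t(G)$-set in Case 4. Domination of $(a,b)\notin D$ with $a\notin S$ uses a $G$-neighbour of $a$ lying in $S$; for $a\in S$, Case 3 uses that $y^*$ is universal in $H$, and Case 4 uses the totality of $S$ to supply a $G$-neighbour of $a$ inside $S$.

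The main obstacle, which I would write out most carefully, is the verification that $V\setminus D$ is connected in Cases 2, 3 and 4. The uniform device is to pick $b^*\in V(H)\setminus\{y^*\}$ (which exists because $\gamma(H)\ne 1$ in Cases 2 and 4, and because $|V(H)|\ge 2$ can be assumed in Case 3) so that $(c,b^*)\notin D$ for every $c\in V(G)$. Given two vertices $(a_1,b_1),(a_2,b_2)\in V\setminus D$ and any $G$-path $a_1=c_0,c_1,\ldots,c_m=a_2$, I would lift it to the walk
\begin{equation*}
(a_1,b_1),\ (c_1,b^*),\ (c_2,b^*),\ \ldots,\ (c_{m-1},b^*),\ (a_2,b_2),
\end{equation*}
which lies entirely in $V\setminus D$ since each interior vertex has second coordinate $b^*\neq y^*$ and consecutive vertices are adjacent via the $G$-edges $\{c_i,c_{i+1}\}$. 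The degenerate cases $m\le 1$ are handled by routing through a single $G$-neighbour of $a_1$, which exists because $G$ is connected and has at least two vertices in all the relevant cases.
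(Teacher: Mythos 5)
Your proposal follows essentially the same route as the paper: the same four-case split, the same constructions ($\{(x,y)\}$ in Case 1, a two-element set in Case 2, and $S\times\{y^*\}$ for a $\gamma(G)$-set resp.\ $\gamma_t(G)$-set $S$ in Cases 3 and 4), and the same lower bounds via Lemmas \ref{th2} and \ref{th4}. Your unified device of lifting a $G$-path through a fixed level $b^*\neq y^*$ is in fact a cleaner and more careful version of the paper's connectivity argument (the paper does not verify that the lifted path avoids $\mathcal{S}'$, nor does it treat the same-column case), and your explicit lower bound of $2$ in Case 2 supplies a step the paper merely asserts.

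One step is wrong as literally written, though trivially fixable: in Case 1 you claim that any two vertices of $V\setminus\{(x,y)\}$ are linked through \emph{any} $(a',b')$ with $a'\neq x$. If both endpoints $(a_1,b_1),(a_2,b_2)$ satisfy $a_1,a_2\neq x$, such an $(a',b')$ need not be adjacent to either of them (take $G$ a star with center $x$ and the three vertices in three distinct leaf columns); the connector in that configuration must be a vertex $(x,v)$ with $v\neq y$, which is exactly what the paper's Case 1-a does. Universality of $x$ only guarantees adjacency into and out of column $x$, not between two non-$x$ columns. I would also note, without holding it against you, that both your argument and the paper's tacitly assume $|V(G)|\ge 2$ and $|V(H)|\ge 2$ (your Case 1 needs a vertex $(a',b')$ with $a'\neq x$ and a level $v\neq y$, and your Case 3 assumes $|V(H)|\ge 2$ without justification); for $G=K_1$ or $H=K_1$ the stated formula itself can fail, as the paper's own subsequent lemmas $\tilde{\gamma_c}(G\circ K_1)=\tilde{\gamma_c}(G)$ and $\tilde{\gamma_c}(K_1\circ H)=\tilde{\gamma_c}(H)$ make clear.
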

	
		
			\begin{proof}
				The proof is by construction:
				\begin{quote}\begin{description}
					\item[Case 1: $\gamma(G) = \gamma(H) = 1$:] 
					 Without loss of generality, assume that $\set{x}$ and $\set{y}$ are minimum dominating sets for $G$ and $H$, respectively. We claim that $\set{(x,y)}$ is an outer-connected dominating set for $G \circ H$.					 
					 For every vertex $u \in V(G)$ and $u \neq x $, it is clear that $\set{x,u} \in E(G) $. So, by the definition  of the lexicographic product, for all vertices $v \in V(H)$, we have
					  \begin{equation}
					  \set{(x,y),(u,v)} \in E(G \circ H).
					   \end{equation} 					  
					  
					On the other hand,  by the definition  of lexicographic product and the fact that $\set{y}$ is a dominating set for $H$, we have
				
				\begin{equation}
				\set{(x,v),(x,y)} \in E(G \circ H).
				\end{equation}
				 
					Therefore, the set $\set{(x,y)}$ is a dominating set for $G \circ H$.  Next, we need to show that the induced graph $ G \circ H \setminus (x,y)$ is connected.
					 Let				
					  $ V(G) = \set{v_1,v_2,\cdots,v_{n-1},x}$ and 					   
					  $ V(H) = \set{u_1,u_2,\cdots,u_{n-1},y}.$					
					So, it suffices to show that for all $(a,b) , (c,d) \in V(G \circ H) \setminus (x,y),$
				    	there exists a path from (a,b) to (c,d) in $G \circ H$ which  does not pass through the vertex (x,y). To this end, there are three cases to consider:
					\begin{quote}\begin{description}
					  
					  	\item[Case 1-a: $a \neq x~~ \text{and}~~c \neq x$:]
					  	
					  	 By the definition  of the lexicographic product and given that the set $\set{x}$ is a dominating set for $G$, we have
					  	 
					  	 \begin{equation*}
					  	\set {(x,v),(a,b)} \in E(G \circ H),
					  	 \end {equation*}
					  	 \begin{equation*}
					  	\set {(x,v),(c,d)} \in E(G \circ H).
					  	 \end{equation*}
					  	 So, there exists a  path of length two between vertices $(a,b)$ and $(c,d)$ which passes through the vertex $(x,v)$.
					  	\item[Case 1-b: $a = x~~ \text{and} ~~c \neq x$:]
					  	 By the definition  of the lexicographic product and given that the set $\set{x}$ is a dominating set for $G$, it is clear that  for all $b\in H$ where $b\neq y$, we have
					  	 
					  	 \begin{equation*}
					  	\set {(x,b),(c,d)} \in E(G \circ H).            
					  	 \end {equation*}
					  	 
					  	 \item[Case 1-c: $a = c = x,~~b \neq y~~\text{and}~~d \neq y$:]
					  	For vertex $(t,u)$ where $t \neq x $, we have
					  	 
					  	 \begin{equation*}
					  \set{(a,b),(t,u)} \in E(G \circ H), ~~ ~~~~~             
					  	 \end {equation*}
					  	  \begin{equation*}
					  	  \set{(c,d),(t,u)} \in E(G \circ H). ~~ ~~~~~             
					  	    \end {equation*}
					  	  
					 So, there exists a path of length two between vertices (a,b) and (c,d) which passes through the vertex (t,u).  	
			\end{description}\end{quote}
				Finally, because the minimum possible size for any dominating set is one, this is clearly a minimum outer-connected dominating set for $G [H]$.
			
			\item[Case 2: $\gamma(G) =  1 ~~ \text{and} ~~\gamma(H) \neq 1$:]
			     Assume that the set $\set{x}$ is a dominating set for G. Similar to  the previous case, for any vertex $u\neq x$ where $u \in V(G) $, all the vertices $(u,v)$ are adjacent to the vertex $(x,y)$ for  $y,v \in V(H)$.
			     Also, each vertex $( u^\prime , v^ \prime  )$ where $u^\prime \neq x$ dominates all the vertices of the form $(x,v)$. So, the set 
			     \begin{equation*}
			     \tilde D (G \circ H) = \set{(x,y),(u^\prime , v^\prime)},
			     \end{equation*}
			     is a dominating set for $G \circ H$.
			     Then, we have
			      \begin{equation*}
			      \tilde \gamma_c(G \circ H) = 2.
			      \end{equation*}
			      
			      Finally, we need to show that the induced graph $ G \circ H \setminus  \tilde D (G \circ H)$ is connected. 
			    To do so, we can apply the same method as in the previous case except that in case 1-c, we need to consider the constraint $(t,u) \neq (u^\prime , v^\prime)$.\\

			 \item[Case 3: $\gamma(G) \neq  1 ~~\text{and} ~~\gamma(H) = 1$:]     
			      
			      	Suppose that the set 	$\mathcal{S}  =  \set{x_1,x_2,\cdots,x_m}$		      	is a  minimum cardinality dominating set for G.  By the definition  of the  lexicographic product and the fact that $\set{y}$ is a dominating set of $H$, we have
			      	
			      		\begin{equation*}
			      	\mathcal{S^\prime}  =  \set{(x_1,y),(x_2,y),\cdots,(x_m,y)},			
			      		\end{equation*}
			      		
			       is a dominating set for $G \circ H $ and 	$\tilde {\gamma_c}(G \circ H) = \gamma(G) $ since $\mid \mathcal{S^\prime}\mid = \mid \mathcal{S} \mid$.\\
			      	Next, we consider vertices $(a,b) , (t,p) \in V(G \circ H) \setminus \mathcal{S^\prime} $. We know that G is a connected graph, so there exists a path from vertex a to vertex t in graph G which is denoted by 
			      	\begin{equation*}
			      	a \rightarrow a_1 \rightarrow a_2 \rightarrow \cdots \rightarrow a_k \rightarrow t,  
			      	\end{equation*}
			      	
			      	where $ a_1 , a_2 , \cdots , a_k \in V(G).$ So, there exists a path from $(a,b)$ to $(t,p)$ in $G \circ H \setminus \mathcal{S^\prime}$ denoted by
			      	\begin{equation*}
			      	(a,b) \rightarrow (a_1,u^\prime_1) \rightarrow (a_2,u^\prime_2) \rightarrow \cdots \rightarrow (a_k,u^\prime_k) \rightarrow (t,p).     
			      	\end{equation*}				 
			      	
			      	So, the induced graph $V(G \circ H) \setminus \mathcal{S^\prime}$ is connected.
			      	
			      	In follow, we show that the set $ \mathcal{S^\prime}$ is minimum.			      	
			       	Let $S^\prime$ be an outer-connected dominating set for  $G \circ H$. Then,
			      	\begin{equation}
			      	\label{eq1}
			      	\tilde{\gamma_c}(G \circ H) \le \mid S^\prime \mid = \gamma(G). 
			      	\end{equation}
			      	
			      	Now, suppose that $S^*$ is a minimum cardinality outer-connected dominating set for $G \circ H$. So, by Lemma \ref{th2},    there is a  dominating set for G with cardinality $\mid S^* \mid$ which we call it $T^*$. So, we have
			      	\begin{equation}
			      	\label{eq2}
			      	\gamma(G) \le \mid T^* \mid \le \mid S^* \mid = \tilde{\gamma_c}(G \circ H), 
			      	\end{equation}
			      	which leads to
			      	$\tilde{\gamma_c}(G \circ H) = 	\gamma(G)$
			      	by Equations \ref{eq1} and \ref{eq2}.

			\item[Case 4: $\gamma(G) \neq  1 ~~ \text{and} ~~\gamma(H) \neq 1$:]
			
			Suppose that the set $\mathcal{S}  =  \set{x_1,x_2,\cdots,x_t}$
			is a minimum cardinality total dominating set for G. For every vertex $x_i \in \mathcal{S},$ the set of vertices dominated by the vertex  $x_i$  is denoted as $S_i$.
			Since the set  $\mathcal{S}$ is a total dominating set for G, we have 
			\begin{equation*} 
			\bigcup_{i=1}^t S_i = V(G), 			
			\end{equation*}
			and for all $x_j \in \mathcal{S}$, there exists a vertex $x_i \in \mathcal{S}$ such that  $x_j \in S_i$. So, the vertex
					
				  $(x_i,v)$ dominates all the vertices of the form $(x_j,v^\prime)$ in $G \circ H$. So, the set 
				\begin{equation*}
				\mathcal{S^\prime}  =  \set{(x_1,v),(x_2,v),\cdots,(x_t,v)},			
				\end{equation*}
				
			dominates all the vertices $(a,b) \in V(G \circ H)$ where $a \in 	\bigcup_{i=1}^t S_i $. Therefore, the set $	\mathcal{S^\prime} $ is a dominating set for $G \circ H $.
			
			 Next, we consider the vertices $(x,y) , (t,p) \in V(G \circ H) \setminus \mathcal{S^\prime} $. We know that G is a connected graph so, there exists a path from vertex x to vertex t in graph G which is denoted by 
			 \begin{equation*}
			 x \rightarrow a_1 \rightarrow a_2 \rightarrow \cdots \rightarrow a_k \rightarrow t, 
			 \end{equation*}
			 
			          where $ a_1 , a_2 , \cdots , a_k \in G.$ So, there exists a path from $(x,y)$ to $(t,p) $ in $G \circ H \setminus \mathcal{S^\prime}$ denoted by
				 \begin{equation*}
				 (x,y) \rightarrow (a_1,u^\prime_1) \rightarrow (a_2,u^\prime_2) \rightarrow \cdots \rightarrow (a_k,u^\prime_k) \rightarrow (t,p).     
				 \end{equation*}				 
				 
				So, the induced graph $V(G \circ H) \setminus \mathcal{S^\prime}$ is connected.\\				
			Finally, to complete the proof, it is necessary to show that  the  dominating set obtained for $G \circ H$   is minimum. To do so, we can apply the same method as in the previous case. 
		
				Let $S^\prime$ be an outer-connected dominating set for  $G \circ H$. Then,
				\begin{equation}
				\label{eq3}
				\tilde{\gamma_c}(G \circ H) \le \mid S^\prime \mid = \gamma_t(G). 
				\end{equation}
				Now, suppose $S^*$ is a minimum cardinality outer-connected dominating set for $G \circ H$. So, by Lemma \ref{th4},    there is a total dominating set for G with cardinality $\mid S^* \mid$ which we call it $T^*$. So, we have
				\begin{equation}
				\label{eq4}
				\gamma_t(G) \le \mid T^* \mid \le \mid S^* \mid = \tilde{\gamma_c}(G \circ H), 
				\end{equation}
				which leads to
				$\tilde{\gamma_c}(G \circ H) = 	\gamma_t(G)$
				by Equations \ref{eq3} and \ref{eq4}.
					
				\end{description}\end{quote}
			\end{proof}

By using the above theorem, we can write an equivalent form of the Vizing's conjecture as follow.

\begin{theorem}
	Let  $G$, $H$ and $K$ be graphs such that $\gamma(G) \neq 1$, $\gamma(H) \neq 1$ and $\gamma(K) = 1$. The Vizing's conjecture is true if and only if
	
	\begin{equation*}
	\tilde{\gamma_c}(G \circ K)\tilde{\gamma_c}(H \circ K) \le \tilde{\gamma_c}(G\Box H)\circ K.
	\end{equation*}
	
\end{theorem}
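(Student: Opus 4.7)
The plan is to show that, under the given hypotheses, each of the three outer-connected domination numbers appearing in the displayed inequality coincides with an ordinary domination number, so that the inequality collapses literally to $\gamma(G)\gamma(H) \le \gamma(G \Box H)$. Once that reduction is in place, both directions of the ``if and only if'' are immediate, because the two inequalities to be compared are then the same inequality.

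First I would dispose of the two factors on the left using Theorem~\ref{th5}. Since $\gamma(G) \neq 1$ and $\gamma(K) = 1$, Case~3 of Theorem~\ref{th5} yields $\tilde{\gamma_c}(G \circ K) = \gamma(G)$; the same argument applied to $H$ gives $\tilde{\gamma_c}(H \circ K) = \gamma(H)$. For the right-hand side I would again invoke Case~3 of Theorem~\ref{th5}, this time with $G \Box H$ playing the role of the first factor and $K$ the second, which requires checking $\gamma(G \Box H) \neq 1$. This is a short side check: if $(u_0, v_0)$ were a dominating vertex of $G \Box H$, then every vertex $(u, v_0)$ with $u \neq u_0$ would have to be adjacent to $(u_0, v_0)$, and by the definition of the Cartesian product this forces $u_0$ to be adjacent in $G$ to every other vertex of $G$, contradicting $\gamma(G) \ge 2$. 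Hence $\gamma(G \Box H) \neq 1$, and Theorem~\ref{th5} gives $\tilde{\gamma_c}\bigl((G \Box H) \circ K\bigr) = \gamma(G \Box H)$.

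Substituting these three equalities into the displayed inequality rewrites it as $\gamma(G)\gamma(H) \le \gamma(G \Box H)$, which is exactly Vizing's conjecture; so one statement holds precisely when the other does. There is no real obstacle: the argument is entirely bookkeeping on top of Theorem~\ref{th5}. The only mild subtlety is notational — the right-hand side must be parsed as $\tilde{\gamma_c}\bigl((G\Box H)\circ K\bigr)$, the outer-connected domination number of the lexicographic product of $G \Box H$ with $K$, for the equivalence to make sense; under that reading the proof is immediate.
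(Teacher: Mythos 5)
Your proposal is correct and follows essentially the same route as the paper: both reduce all three outer-connected domination numbers to ordinary domination numbers via Case~3 of Theorem~\ref{th5} (after verifying $\gamma(G\Box H)\neq 1$, which you do by a direct adjacency argument and the paper does by citing $\gamma(G\Box H)\geq\gamma(G)$), so that the displayed inequality becomes literally $\gamma(G)\gamma(H)\le\gamma(G\Box H)$. If anything, your single substitution is slightly cleaner than the paper's forward direction, which detours through $\gamma(G\Box H)\le\tilde{\gamma_c}(G\Box H)=\tilde{\gamma_c}((G\Box H)\circ K)$ rather than applying Case~3 to $(G\Box H)\circ K$ directly.
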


\begin{proof}
	According to Vizing's conjecture for all graphs $G$ and $H$,
	\begin{equation*}
	\gamma(G)\gamma(H) \le \gamma(G\Box H).
	\end{equation*}			 

	By  using Theorem \ref{th5} we get the following inequality:
	\begin{equation*}
	\tilde{\gamma_c}(G \circ K)\tilde{\gamma_c}(H \circ K) = \gamma(G)\gamma(H) 
	\le \gamma(G\Box H) \le \tilde{\gamma_c}(G \Box H)
	= \tilde{\gamma_c}((G\Box H)\circ K),
	\end{equation*}
therefore
	\begin{equation*}
	\tilde{\gamma_c}(G \circ K)\tilde{\gamma_c}(H \circ K)\leq \tilde{\gamma_c}((G\Box H)\circ K).
	\end{equation*}
Conversely, we consider that the inequality  $
\tilde{\gamma_c}(G \circ K)\tilde{\gamma_c}(H \circ K)\leq \tilde{\gamma_c}((G\Box H)\circ K)
$ is true. By assumptions of the theorem, $\gamma(G) \neq 1$, and that  $\gamma(G\Box H) \geq \gamma(G)$ we have $\gamma(G\Box H)\neq 1$. So, by using Theorem \ref{th5} we get
$$\gamma(G)\gamma(H)=
\tilde{\gamma_c}(G \circ K)\tilde{\gamma_c}(H \circ K)\leq \tilde{\gamma_c}((G\Box H)\circ K)=\gamma(G\Box H),
$$
or
$$\gamma(G)\gamma(H)\leq \gamma(G\Box H).
$$
\end{proof}

							


			
		\begin{lemma}
		 If $H=K_1$, then we have $\tilde {\gamma_c} (G \circ H) = \tilde{\gamma_c} (G)$.
		\end{lemma}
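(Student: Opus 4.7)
The plan is to show that $G \circ K_1$ is isomorphic to $G$ as a graph, from which the equality of any graph invariant, in particular $\tilde{\gamma_c}$, follows immediately. Since $K_1$ consists of a single vertex $v$ and has no edges, the vertex set of $G \circ K_1$ is $V(G) \times \{v\}$, and I propose to use the natural map $\varphi : V(G) \to V(G \circ K_1)$ defined by $\varphi(u) = (u,v)$.

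Next I would verify that $\varphi$ is a graph isomorphism by unpacking the definition of the lexicographic product. Two vertices $(u_1,v)$ and $(u_2,v)$ in $G \circ K_1$ are adjacent iff $\{u_1,u_2\} \in E(G)$, or $u_1 = u_2$ and $\{v,v\} \in E(K_1)$. The second clause is vacuous because $E(K_1) = \emptyset$, so adjacency in $G \circ K_1$ is exactly $\{u_1,u_2\} \in E(G)$, which is adjacency in $G$. Hence $\varphi$ is a bijection that preserves and reflects edges, i.e., $G \circ K_1 \cong G$.

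Finally, because $\tilde{\gamma_c}$ is a graph isomorphism invariant (a set $\tilde{D} \subseteq V(G)$ is an outer-connected dominating set iff $\varphi(\tilde{D})$ is one in $G \circ K_1$, and $|\tilde{D}| = |\varphi(\tilde{D})|$), the minima coincide and $\tilde{\gamma_c}(G \circ K_1) = \tilde{\gamma_c}(G)$. There is no real obstacle in this argument; the only point that requires care is noting that the ``$u_1 = v_1$ and $\{u_2,v_2\} \in E(H)$'' clause in the definition of the lexicographic product contributes nothing when $H = K_1$, so the proof reduces to recognizing the trivial isomorphism.
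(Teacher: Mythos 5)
Your proposal is correct and follows essentially the same route as the paper: both arguments establish that the natural bijection $u \mapsto (u,v)$ is a graph isomorphism between $G$ and $G \circ K_1$, so the outer-connected domination numbers coincide. Your version is simply more explicit about why the second adjacency clause of the lexicographic product is vacuous when $E(K_1)=\emptyset$.
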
	
		\begin{proof}
		  It is easy to verify, since  for every $x \in V(G)$, there exists exactly one vertex $(x,y) \in V(G \circ H)$ where $y$ is the only vertex in H and for every $\set{x,u} \in E(G) $, there exists exactly one edge connecting  $(x,y)$ and $(u,y)$ in $G \circ H$. 
		      
		\end{proof}
		‌‌ 
			\begin{lemma}
				If $G=K_1$, then we have $\tilde {\gamma_c} (G \circ H) = \tilde{\gamma_c} (H)$.
			\end{lemma}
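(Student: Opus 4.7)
The plan is to observe that when $G = K_1$, the lexicographic product $G \circ H$ is isomorphic to $H$, so the outer-connected domination numbers must coincide. Concretely, I would let $x$ denote the unique vertex of $G$, so that $V(G \circ H) = \{x\} \times V(H)$, and define the bijection $\varphi : V(G \circ H) \to V(H)$ by $\varphi(x, v) = v$.

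Next, I would verify that $\varphi$ is a graph isomorphism by checking the edge condition directly from the definition of the lexicographic product. Two vertices $(x, u)$ and $(x, v)$ are adjacent in $G \circ H$ if and only if either $\{x, x\} \in E(G)$ (impossible, as $G = K_1$ has no edges) or $x = x$ and $\{u, v\} \in E(H)$. Hence $(x, u)(x, v)$ is an edge of $G \circ H$ exactly when $uv \in E(H)$, which is exactly when $\varphi(x, u)\varphi(x, v) \in E(H)$.

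Once the isomorphism is established, the conclusion is immediate: any outer-connected dominating set $\tilde{D}$ of $H$ lifts to $\varphi^{-1}(\tilde{D})$, an outer-connected dominating set of $G \circ H$ of the same size (domination is preserved under isomorphism, and $\varphi$ carries the induced subgraph on $V(H) \setminus \tilde{D}$ to the induced subgraph on $V(G \circ H) \setminus \varphi^{-1}(\tilde{D})$, preserving connectivity). The same argument applies in reverse with $\varphi$, so minimum sizes agree and $\tilde{\gamma_c}(G \circ H) = \tilde{\gamma_c}(H)$.

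There is no real obstacle here; the only thing to be careful about is making sure the edge-condition check from the definition is written out cleanly, since the lexicographic product's definition has two clauses and one needs to explicitly rule out the first clause using $E(G) = \emptyset$.
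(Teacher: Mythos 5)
Your proposal is correct and follows essentially the same route as the paper: both arguments rest on the observation that $v \mapsto (x,v)$ is a graph isomorphism between $H$ and $K_1 \circ H$, from which the equality of outer-connected domination numbers is immediate. Your version is simply more explicit about checking the edge condition and about why the invariant is preserved under isomorphism.
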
	
				\begin{proof}
					It is clear since for every $y \in V(H)$, there exists exactly one vertex $(x,y) \in V(G \circ H)$ where $x$ is the only vertex in G and for every $\set{v,y} \in E(G) $, there exists exactly one edge connecting  $(x,y)$ and $(x,v)$ in $G \circ H$. 
					
				\end{proof}

	By Theorem \ref{th5} outer-connected domination number of $G \circ H$ denoted by $\tilde {\gamma_c} (G \circ H)$ depends on the values $\gamma (G)$ and $\gamma_t (G)$. We know that it is NP-hard to compute the domination number and the total domination number. So, We apply the following Lemma to determine the upper bound for $\tilde {\gamma_c} (G \circ H)$.
		
	\begin{lemma}
		\label{lem1}
		If G is a connected graph of order n and H is a graph of order m, then we have
		\begin{equation}
		\tilde{\gamma_c} (G \circ H) \le mn-(\delta(G) + \delta(H)).
		\end{equation}
	\end{lemma}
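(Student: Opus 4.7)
My plan is to reduce the lemma to the Cyman bound $\tilde{\gamma_c}(F) \le |V(F)| - \delta(F)$ (the theorem stated at the start of the paper) applied to $F = G \circ H$, and then to compute $\delta(G \circ H)$ explicitly in terms of $\delta(G)$ and $\delta(H)$.

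First I would verify that the hypotheses of the Cyman bound hold for $G \circ H$. Since $G$ is assumed connected, the lexicographic product $G \circ H$ is connected: any two vertices $(u_1,v_1)$ and $(u_2,v_2)$ can be joined by lifting a $u_1$-to-$u_2$ path in $G$ to $G \circ H$ (the second coordinates along such a lift are unconstrained by the definition of $\circ$). The order of $G \circ H$ is obviously $nm$.

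Next I would compute the minimum degree. For a vertex $(u,v) \in V(G \circ H)$, the neighbours are exactly the pairs $(u',v'')$ with $u' \in N_G(u)$ and $v'' \in V(H)$ arbitrary, together with the pairs $(u,v')$ with $v' \in N_H(v)$. Hence
\begin{equation*}
\deg_{G \circ H}((u,v)) = m\cdot \deg_G(u) + \deg_H(v),
\end{equation*}
so minimising over $(u,v)$ yields $\delta(G \circ H) = m\,\delta(G) + \delta(H)$.

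Finally I would combine these ingredients. Applying the Cyman bound from the opening theorem to the connected graph $G \circ H$ of order $nm$ gives
\begin{equation*}
\tilde{\gamma_c}(G \circ H) \le nm - \delta(G \circ H) = nm - m\,\delta(G) - \delta(H).
\end{equation*}
Since $m \ge 1$, we have $m\,\delta(G) \ge \delta(G)$, and therefore $nm - m\,\delta(G) - \delta(H) \le nm - \delta(G) - \delta(H)$, which is the claimed inequality.

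There is no real obstacle: the argument is a direct application of the already-quoted bound once one knows the minimum degree of a lexicographic product. The only place that requires a moment of care is the connectedness of $G \circ H$ (so that Cyman's bound is applicable), and the harmless loss of the factor $m$ on $\delta(G)$, which is needed to match the slightly weaker form stated in the lemma.
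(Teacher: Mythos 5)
Your proposal is correct and follows the same route as the paper: apply Cyman's bound $\tilde{\gamma_c}(F)\le |V(F)|-\delta(F)$ to $F=G\circ H$, using that $G\circ H$ has order $mn$ and minimum degree at least $\delta(G)+\delta(H)$. The one substantive difference is in your favour: the paper asserts $\delta(G\circ H)=\delta(G)+\delta(H)$, which is false in general, since the degree of $(u,v)$ in $G\circ H$ is $m\deg_G(u)+\deg_H(v)$ exactly as you compute, giving $\delta(G\circ H)=m\,\delta(G)+\delta(H)$; your argument therefore actually establishes the stronger bound $\tilde{\gamma_c}(G\circ H)\le mn-(m\,\delta(G)+\delta(H))$ before deliberately discarding the factor $m$ to match the statement. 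One small caveat applying to both your write-up and the paper's: the claim that $G$ connected forces $G\circ H$ connected fails in the degenerate case $G=K_1$ with $H$ disconnected (then $G\circ H\cong H$), and connectedness is a hypothesis of the quoted Cyman theorem; aside from that edge case your proof is complete.
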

	\begin{proof}
		By the defination of the Lexicographic products, $G \circ H$ is a graph of order $mn$ and $\delta(G \circ H) = \delta(G) + \delta(H)$. Therefore we have  
		\begin{equation}
		\tilde{\gamma_c} (G \circ H) \le mn-(\delta(G) + \delta(H)),
		\end{equation}
		by Theorem $5$ in \cite{cyman2007outer1connected}.
	\end{proof}
		The following is a tight example for Lemma \ref{lem1}.
	\begin{example}

		Let $G$ and $H$ be two graphs with $V(G) =\{p\}$ , $V(H) =\{x,y\}$ and $E(H)=\{\{x,y\}\}$.  Then,  we have
		\begin{equation}
    	V(G \circ H) = \{(p,x) , (p,y)\},
    	\end{equation} 
    	\begin{equation}    		
	    E(G \circ H) = \set{\{(p,x),(p,y)\}},
	    \end{equation}
	    
	    	\begin{equation}
	    	  \tilde{\gamma_c}(G \circ H) = 1,
	    	\end{equation}
	    and	
	   		\begin{equation}
	   		 mn -(\delta(G) + \delta(H)) = 1*2 -(0+1) = 1.
	   		 \end{equation}
	\end{example}

	\section{Outer Connected Domination in the Corona
		Product of Two Graphs}

		\begin{lemma}
			\label{corona}
			Suppose that $G$ is a connected graph and $\tilde{D} \subseteq V(G)$ is an outer-connected dominating set for
			$G$. If $u \in \tilde{D}$ is a cut vertex, then all the vertices $v \in V \setminus \tilde{D}$  belong to exactly a single component of  $V \setminus \{u\}$.  
		\end{lemma}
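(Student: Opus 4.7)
The plan is a short proof by contradiction that leverages only two facts: the connectedness of $G[V \setminus \tilde{D}]$ (which is guaranteed by the definition of an outer-connected dominating set), and the observation that $u \notin V \setminus \tilde{D}$ since $u \in \tilde{D}$.

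First, I would set up the contradiction by assuming that two distinct vertices $v_1, v_2 \in V \setminus \tilde{D}$ lie in different components, say $C_1$ and $C_2$, of the graph $G - u$ obtained by removing the cut vertex $u$. The goal is to exhibit a path between them that does not use $u$, which contradicts them being in different components of $G - u$.

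Next, I would use the outer-connected property of $\tilde{D}$: since $G[V \setminus \tilde{D}]$ is connected and both $v_1$ and $v_2$ belong to this induced subgraph, there exists a path
\begin{equation*}
v_1 = w_0 \to w_1 \to w_2 \to \cdots \to w_k = v_2
\end{equation*}
entirely within $V \setminus \tilde{D}$. Because $u \in \tilde{D}$, none of the internal vertices $w_i$ equals $u$, so this path is also a path in $G - u$. Hence $v_1$ and $v_2$ lie in the same component of $G - u$, contradicting our assumption.

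There is no real obstacle here; the lemma is essentially an unpacking of the definition of ``outer-connected'' combined with the definition of a cut vertex. The only care needed is to make sure that the path witnessing the connectivity of $G[V \setminus \tilde{D}]$ genuinely avoids $u$, which is immediate since paths in $G[V \setminus \tilde{D}]$ only traverse vertices outside $\tilde{D}$.
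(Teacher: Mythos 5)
Your proof is correct and follows essentially the same route as the paper: both argue by contradiction that two vertices of $V \setminus \tilde{D}$ in different components of $G - u$ would be separated by $u$, yet the connectedness of $G[V \setminus \tilde{D}]$ supplies a path between them avoiding $u$ since $u \in \tilde{D}$. Your version is slightly more explicit than the paper's in noting why that path avoids $u$, which is a welcome clarification but not a different argument.
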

		
		\begin{proof}
			Let  $u \in \tilde{D}$ be a cut vertex and $c_1,c_2,\dots , c_m$ be components of the induced subgraph  $G[V \setminus \{u\}]$. Suppose that there exist arbitrary vertices
			$x,y \in V \setminus \tilde{D}$ such that $x \in c_i$ and $y \in c_j$ for $i \neq j , 1\le i,j \le m$.
			Thus, there exists no path from $x$ to $y$ in $G[V \setminus \{u\}]$, and this is a contradiction to the assumption that  $\tilde{D}$ is an outer-connected dominating set. So, 
			$x$ and $y$ are certainly in the same component of  $G[V \setminus \{u\}]$. Since the vertices $x$ and $y$ are chosen arbitrarly, then the theorem is proven.
		\end{proof}
		\begin{corollary}
				Suppose that $G$ is a connected graph and $\tilde{D} \subseteq V(G)$ is an outer-connected dominating set for
				$G$. Let  $u \in \tilde{D}$ be a cut vertex  and $c_1,c_2,\dots , c_m$ be components of the induced subgraph  $G[V \setminus \{u\}]$. Then, all the  vertices in its $m-1$ components are included in $\tilde{D}$.
			
					\end{corollary}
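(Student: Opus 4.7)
The plan is to derive this essentially as an immediate consequence of Lemma \ref{corona}, so the main task is to make the bookkeeping explicit rather than to introduce any new idea.

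First I would invoke Lemma \ref{corona} with the given $\tilde{D}$ and cut vertex $u$. That lemma says that all vertices of $V\setminus\tilde{D}$ sit inside exactly one component of $G[V\setminus\{u\}]$. So among the components $c_1,c_2,\dots,c_m$ there is a distinguished index $k$, determined by $\tilde D$ and $u$, such that $V(c_k)\supseteq V\setminus\tilde{D}$ (possibly $V(c_k)$ also contains vertices of $\tilde D$, but that is harmless).

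Second, I would argue by elimination on the other components. Fix any $i\in\{1,\dots,m\}$ with $i\neq k$ and pick any $v\in V(c_i)$. If $v\notin\tilde{D}$, then $v\in V\setminus\tilde{D}\subseteq V(c_k)$, contradicting $c_i\neq c_k$ since the components are disjoint. Hence $v\in\tilde{D}$. Since $v$ was arbitrary, $V(c_i)\subseteq\tilde{D}$. This holds for each of the $m-1$ indices $i\neq k$, which is exactly the statement of the corollary.

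I do not expect any real obstacle: the content is carried entirely by Lemma \ref{corona}, and the only thing to be careful about is to single out the one component that may contain non-dominating vertices and to note that the remaining $m-1$ components are then forced into $\tilde{D}$ by the pigeonhole-style argument above.
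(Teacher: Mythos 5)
Your proposal is correct and matches the intended derivation: the paper states this corollary without proof as an immediate consequence of Lemma \ref{corona}, and your argument --- single out the one component that contains $V\setminus\tilde{D}$ and conclude by disjointness that every other component lies in $\tilde{D}$ --- is exactly that derivation, made explicit. (The only invisible edge case is $V\setminus\tilde{D}=\emptyset$, where the conclusion holds trivially for all $m$ components.)
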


		\begin{theorem}
			\label{th6}
			Let G be a connectecd graph and H is an arbitrary graph. The set $\tilde{D} \subset V(G \circ_c H)$ is an outer-connected domination for $G \circ_c H$ if and only if 
			\begin{equation}
			\tilde{D} = \cup_{x \in V(G)}(D(H^x)),
			\end{equation}
			where D is a minimum dominating set of graph $H^x$, and $H^x$ is the copy of graph H whose vertices are attached to the vertex
			$x$ in G. 		
		\end{theorem}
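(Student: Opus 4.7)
The plan is to handle the two directions of the equivalence separately. The sufficiency direction will be a direct verification of the two defining conditions, while the necessity will rely on the cut-vertex structure of the Corona product together with Lemma \ref{corona}.

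For sufficiency, I would take $\tilde{D} = \bigcup_{x \in V(G)} D(H^x)$ with each $D(H^x)$ a minimum dominating set of the copy $H^x$, and check both conditions. To establish domination, every $y \in V(H^x) \setminus D(H^x)$ already has a neighbor inside $D(H^x) \subseteq \tilde{D}$ because $D(H^x)$ dominates $H^x$, and every $x \in V(G)$ is joined by the Corona construction to every vertex of $H^x$, so it is dominated by any vertex of $D(H^x) \subseteq \tilde{D}$. To verify connectedness of $V(G \circ_c H) \setminus \tilde{D}$, I would note that its restriction to $V(G)$ is a copy of $G$ and thus connected by hypothesis, and every remaining vertex in $V(H^x) \setminus D(H^x)$ is adjacent to $x \in V(G)$, so the rest of the complement attaches to the $G$-portion.

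For necessity, I would let $\tilde{D}$ be a minimum outer-connected dominating set of $G \circ_c H$ and study the contribution $\tilde{D} \cap (V(H^x) \cup \{x\})$ for each $x \in V(G)$. If $x \notin \tilde{D}$, then no vertex outside $V(H^x)$ can dominate a vertex of $V(H^x)$, so $\tilde{D} \cap V(H^x)$ must itself dominate $H^x$, giving $|\tilde{D} \cap V(H^x)| \geq \gamma(H)$. If instead $x \in \tilde{D}$ and $|V(G)| \geq 2$, then $x$ is a cut vertex of $G \circ_c H$ because the only edges leaving $V(H^x)$ are incident to $x$. Applying Lemma \ref{corona} at $x$, the whole of $V \setminus \tilde{D}$ must lie in a single connected component of $G \circ_c H - x$; since $V(H^x)$ is entirely separated from $V(G) \setminus \{x\}$ after removing $x$, this forces either $V(H^x) \subseteq \tilde{D}$ or the entire remainder $V(G \circ_c H) \setminus (V(H^x) \cup \{x\})$ into $\tilde{D}$, both of which strictly exceed the $\gamma(H)$ cost of the previous case. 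By minimality, no $x \in V(G)$ belongs to $\tilde{D}$, and each $\tilde{D} \cap V(H^x)$ is a minimum dominating set of $H^x$, which is exactly the claimed form.

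The main obstacle, I expect, will be handling the degenerate setting $|V(G)| = 1$, where $G \circ_c H$ collapses to a graph in which no vertex of $G$ is a cut vertex, so the cut-vertex argument must be replaced by a direct check. I also need to be careful when $H$ has isolated vertices, ensuring that the comparison $\gamma(H) \leq |V(H)|$ is strict enough that keeping $x$ outside $\tilde{D}$ is genuinely cheaper than paying the $|V(H)|+1$ cost of the cut-vertex case; and for the final structural conclusion I must make sure that equality in the count forces each $\tilde{D} \cap V(H^x)$ to be a \emph{minimum} dominating set of $H^x$ rather than merely a dominating set.
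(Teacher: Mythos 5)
Your overall strategy is the same as the paper's: sufficiency by direct verification of domination plus the observation that the complement retains the connected copy of $G$ with each leftover $H^x$-vertex attached to $x$; necessity by exploiting that each $x\in V(G)$ is a cut vertex of $G\circ_c H$ together with Lemma \ref{corona} and a minimality count. (Both you and the paper silently strengthen the hypothesis to \emph{minimum} outer-connected dominating sets; as literally stated the forward implication is false, e.g.\ $\tilde D = V(G\circ_c H)\setminus\{h\}$ for a single $h\in V(H^x)$ is outer-connected dominating but not of the claimed form. The paper's proof also invokes ``not minimum,'' so this is a defect of the statement, not of your reading of it.) Your per-copy accounting $\lvert\tilde D\cap(V(H^x)\cup\{x\})\rvert\ge\gamma(H)$, summed over $x$ and compared with the achievable total $|V(G)|\gamma(H)$, is in fact more explicit than the paper's one-line assertion that $\tilde D$ ``is not minimum,'' and it correctly forces each $\tilde D\cap V(H^x)$ to be a minimum dominating set of $H^x$ at equality.

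There is, however, one step where your argument as written does not close: in the sub-case $x\in\tilde D$ where Lemma \ref{corona} forces the \emph{entire remainder} $V(G\circ_c H)\setminus(V(H^x)\cup\{x\})$ into $\tilde D$, the claim that this ``strictly exceeds the $\gamma(H)$ cost'' is not a local comparison and the naive global count can fail. There $\lvert\tilde D\rvert\ge(n-1)(m+1)+1$ with $n=|V(G)|$, $m=|V(H)|$, and this quantity equals $nm+n-m$, which is \emph{smaller} than $n\gamma(H)$ whenever $m>n$ and $\gamma(H)$ is close to $m$ (e.g.\ $G=K_2$ and $H$ edgeless on $10$ vertices gives $12<20$). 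To repair it you must additionally use that in this sub-case $V\setminus\tilde D\subseteq V(H^x)$ induces a connected subgraph of $H$, hence lies inside a single component $C$ of $H$, so $\lvert\tilde D\cap V(H^x)\rvert\ge m-|C|$; combining this with $\gamma(H)\le\gamma(C)+(m-|C|)$ and $\gamma(C)\le\max\{1,|C|/2\}$ yields the strict inequality for $n\ge2$ (and $n=1$ is the degenerate case you already flagged). With that patch, and your noted separate treatment of $|V(G)|=1$, the proof is complete and somewhat more rigorous than the paper's own.
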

		\begin{proof}
			If the set $\tilde{
				D} \subseteq V(G \circ_c H)$ is an outer-connected domination for $G \circ_c H$ and $x \in \tilde{D}$, then x is not a cut vertex. Otherwise, according to Theorem \ref{corona} and  by the assumption that the graph $(G \circ_c H) \setminus {v}$ has m components, all  the vertices in its $m-1$ components are included in the  outer-connected dominating set. As a result, the set 
			$\tilde{D}$ is not minimum. \\
			Since every vertex $v \in V(G)$	 is a  cut vertex in  $G \circ_c H$, then none of the vertices $v \in G$ are  in $\tilde{D}$.
			Therefore, by the definition of the  Corona product and the obtained fact that  $v \in G$ is not  in $\tilde{D}$, we have 
			\begin{equation}
			\tilde{D} = \cup_{x \in V(G)}(D(H^x)).
			\end{equation}

			To prove the converse of the theorem, suppose that	$\tilde{D} = \cup_{x \in V(G)}(D(H^x))$. It is clear that $D(H^x)$ is a  dominating set for $H^x \cup \set{x}$, then $\cup_{x \in V(G)}(D(H^x))$ is a  dominating set for $G \circ H$. On the other hand, G is a connected graph and all of its vertices $u \in H^x \setminus D(H^x)$ are connected to $x$. So, $\tilde{D}$ is an outer-connected dominating set for $G \circ_c H$. 
			Eventually, it is clear that
			\begin{equation}
			\mid D(H^x) \mid \le \mid \tilde{D}(H^x) \mid.  
			\end{equation} 
			So, the set $\tilde{D}$ is minimum.  	
		\end{proof}
		\begin {corollary}
		Let G be a connected graph 
		and H be an arbitrary graph. Then, we have
		\begin{equation}
		\tilde \gamma_c(G \circ_c H) = |V(G)| \gamma(G).
		\end{equation}
	\end{corollary}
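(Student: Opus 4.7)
My plan is to derive this corollary as an immediate numerical consequence of Theorem \ref{th6}. Before starting, I should flag that the right-hand side as printed reads $|V(G)|\,\gamma(G)$, whereas the construction in Theorem \ref{th6} naturally produces the count $|V(G)|\,\gamma(H)$; I will carry out the argument that Theorem \ref{th6} makes immediate and record the value $|V(G)|\,\gamma(H)$, which I believe the displayed $\gamma(G)$ is a typographical slip for. I do not see any route that would honestly yield $|V(G)|\,\gamma(G)$ in general (for instance with $G=P_4$ and $H=P_7$ the two expressions already disagree), so the plan is to compute the quantity that the theorem actually gives and note the discrepancy rather than silently rename $H$ to $G$.

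The argument itself would proceed in two short steps. First, invoke Theorem \ref{th6}: a minimum outer-connected dominating set of $G \circ_c H$ has the form $\tilde D = \bigcup_{x \in V(G)} D(H^x)$, where each $D(H^x)$ is a minimum dominating set of the copy $H^x$. By the definition of the Corona product the copies $H^x$ are pairwise vertex-disjoint induced subgraphs of $G \circ_c H$, so the sets $D(H^x)$ are pairwise disjoint and $|\tilde D| = \sum_{x \in V(G)} |D(H^x)|$.

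Second, since each $H^x$ is an isomorphic copy of $H$, we have $|D(H^x)| = \gamma(H^x) = \gamma(H)$ for every $x \in V(G)$. Summing across the $|V(G)|$ terms then yields
\begin{equation*}
\tilde\gamma_c(G \circ_c H) \;=\; \sum_{x \in V(G)} \gamma(H^x) \;=\; |V(G)|\,\gamma(H),
\end{equation*}
which I would then match against the displayed right-hand side of the corollary, flagging the $\gamma(G)$/$\gamma(H)$ discrepancy.

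There is no substantive obstacle: the corollary is essentially a numerical repackaging of Theorem \ref{th6}. The only item worth double-checking is that no vertex of $G$ itself is forced into $\tilde D$, i.e.\ that the dominating sets restricted to the copies suffice for outer-connected domination of the whole Corona product. That fact is exactly the cut-vertex observation from Lemma \ref{corona} and is already absorbed into the forward direction of Theorem \ref{th6}, so no further argument is required beyond citing the theorem and performing the sum.
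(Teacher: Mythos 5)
Your proposal takes exactly the route the paper intends: the paper's own proof is a one-line appeal to Theorem \ref{th6} and the definition of the Corona product, and your disjoint-union count $|\tilde D|=\sum_{x\in V(G)}|D(H^x)|=|V(G)|\,\gamma(H)$ is precisely the computation being left implicit there. You are also right to flag that the displayed right-hand side $|V(G)|\,\gamma(G)$ is a typographical slip for $|V(G)|\,\gamma(H)$ --- Theorem \ref{th6} forces every vertex of each copy $H^x$ to be dominated within that copy, so the correct value involves $\gamma(H)$, as your $G=P_4$, $H=P_7$ example confirms.
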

	\begin{proof}
	The proof is clear from Theorem \ref{th6} and the defination of the corona product.	
		
	\end{proof}
	
		\section{Outer Connected Domination in the Cartesian
						Product of Two Graphs}

					 	In this section, we present an upper bound for outer-connected domination number in Cartesian product graphs.

					 

					\begin{theorem}
						\label{cart}
						For any graphs G and H, we have
						\begin {equation}
						\tilde{\gamma_c}(G \Box H) \le \tilde{\gamma_c}(G) \times |V(H)|.
					\end{equation}
				\end{theorem}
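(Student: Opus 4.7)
The plan is to construct an explicit outer-connected dominating set for $G \Box H$ whose size matches the right-hand side. Let $\tilde{D}$ be a minimum outer-connected dominating set of $G$ so that $|\tilde{D}| = \tilde{\gamma_c}(G)$ and $G[V(G)\setminus \tilde{D}]$ is connected. I would set
\begin{equation*}
D' = \tilde{D} \times V(H) \subseteq V(G \Box H),
\end{equation*}
so that $|D'| = \tilde{\gamma_c}(G)\cdot |V(H)|$, and argue that $D'$ is an outer-connected dominating set of $G \Box H$.

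For the domination condition, given $(u,v) \notin D'$ we have $u \notin \tilde{D}$. Because $\tilde{D}$ dominates $G$, there exists $u' \in \tilde{D}$ with $\{u,u'\} \in E(G)$, and then by the definition of the Cartesian product $(u',v) \in D'$ is adjacent to $(u,v)$ in $G \Box H$. This is the short routine step.

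The main work is verifying that the induced subgraph on $V(G \Box H)\setminus D' = (V(G)\setminus \tilde{D})\times V(H)$ is connected, and this is where the only subtlety lies. This induced subgraph is isomorphic to $G[V(G)\setminus \tilde{D}] \Box H$. Given two vertices $(a,b)$ and $(c,d)$ in it, I would connect them in two stages: first use a path $a = x_0, x_1, \ldots, x_k = c$ in the (connected) induced subgraph $G[V(G)\setminus \tilde{D}]$ to move along the $H$-fiber of $b$ to $(c,b)$ via the vertices $(x_i,b)$, and then use a path in $H$ from $b$ to $d$ inside the fiber $\{c\}\times V(H)$ to reach $(c,d)$. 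All of these intermediate vertices have first coordinate outside $\tilde{D}$, so the path avoids $D'$.

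The hard part, and really the only subtle point, is that the second leg of this concatenated path requires $H$ to be connected, so the theorem as stated is best read under the standing assumption (consistent with the rest of the section) that both factors are connected; otherwise one would need to restrict attention to a connected component of $H$ and the bound would need adjustment. Once connectivity of $H$ is in hand, the two-stage path argument goes through cleanly and combining the domination and connectivity verifications shows $D'$ is an outer-connected dominating set, yielding $\tilde{\gamma_c}(G\Box H) \le |D'| = \tilde{\gamma_c}(G)\cdot |V(H)|$.
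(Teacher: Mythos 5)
Your construction $D' = \tilde{D}\times V(H)$, the domination check, and the two-stage path argument (first along $G[V(G)\setminus\tilde{D}]$ in the fiber of $b$, then along $H$ in the fiber of $c$) are essentially identical to the paper's own proof, which proves the same claim by splitting into the cases $a=x$, $a\neq x \wedge y=b$, and $a\neq x \wedge y\neq b$. You are also right to flag that connectivity of $H$ is silently required: the paper states the theorem for ``any graphs'' but its proof likewise invokes ``since $H$ is a connected graph,'' so your reading of the standing hypothesis matches what the paper actually uses.
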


				\begin{proof}
						We first show the following  claim and using it, we have
						
							\begin{equation}
							\tilde{\gamma_c}(G \Box H) \le |T| = | \tilde{D} | \times |V(H)| = \tilde{\gamma_c}(G) * |V(H)|
							\end{equation}		
							
						\begin{claim}
							\label{cl1}
					   Let $\tilde{D}$  be an outer-connected dominating set of $G$. Then, $T=\tilde{D} \times V(H)$ 
					  is an outer-connected dominating set for $G \Box H$. 
					  \end {claim}
					 \begin{proof}
					 	
					  Let $(a, b) \in V(G\Box H)\setminus T$, then we have $a \notin\tilde{D}$. Since $\tilde{D}$ is an Outer-connected dominating set for $G$, then there
						exists a vertex $x$ in $\tilde{D}$ such that $\{a,x\} \in E(G)$. On the other hand, $b \in V(H)$ so we have $(x, b) \in T$. Then, according to the defination of the Cartesian product, the vertex $(a,b)$ is adjacent to vertex $(x, b)$ in $G \Box H$ which means that the vertex $(a,b)$ is dominated by the vertex $(x, b) \in T$.
						
						Now, consider two vertices $(a,b),(x,y) \in V(G \Box H ) \setminus T$. If $a = x$,  then there exist a path from vertex $b$ to vertex $y$ in graph $H$
						in the form  
						
							\begin{equation*}
							b \rightarrow t_1 \rightarrow t_2 \rightarrow \cdots \rightarrow t_k \rightarrow y, 
							\end{equation*}
							where $ t_1 , t_2 , \cdots , t_k \in H$  since $H$ is a connected graph. So, there exists a path from $(x,y)$ to $(a,b) $ in $(G \Box H) \setminus T$ denoted by
							\begin{equation*}
							(a,b) \rightarrow (a,t_1) \rightarrow (a,t_2) \rightarrow \cdots \rightarrow (a,t_k) \rightarrow (x,y).     
							\end{equation*}								
						If $a \neq x$ and $ y = b$, since $\tilde{D}$ is an outer-connected dominating set of  graph $G$  and $a,x \notin\tilde{D}$, then there exists a path from vertex $a$ to vertex $x$ in the graph $G \setminus \tilde{D}$
						which is denoted by
						
						\begin{equation*}
						a \rightarrow t_1 \rightarrow t_2 \rightarrow \cdots \rightarrow t_k \rightarrow x, 
						\end{equation*}
						where $ t_1 , t_2 , \cdots , t_k \in G \setminus \tilde{D}.$ So, there exists a path from $(x,y)$ to $(a,b) $ in $(G \Box H) \setminus T$ denoted by
						\begin{equation*}
						(a,b) \rightarrow (t_1,b) \rightarrow (t_2,b) \rightarrow \cdots \rightarrow (t_k,b) \rightarrow (x,y).     
						\end{equation*}

							If $a \neq x$ and $ y \neq b$, then there exists a path from vertex $a$ to vertex $x$ in graph $G \setminus \tilde{D}$
							which is denoted by
							
							\begin{equation*}
							a \rightarrow t_1 \rightarrow t_2 \rightarrow \cdots \rightarrow t_k \rightarrow x,
							\end{equation*}
							where $ t_1 , t_2 , \cdots , t_k \in G \setminus \tilde{D}$ since $\tilde{D}$ is an outer-connected dominating set of  graph $G$  and $a,x \notin\tilde{D}$. So, there exists a path from $(a,y)$ to $(x,y) $ in $(G \Box H) \setminus T$ denoted by
							\begin{equation*}
							(a,y) \rightarrow (t_1,b) \rightarrow (t_2,b) \rightarrow \cdots \rightarrow (t_k,b) \rightarrow (x,y).     
							\end{equation*}

						On the other hand, since $H$ is a connected graph, then there exists a path from the vertex $b$ to the vertex $y$ in graph $H$
						of the form  
						
						\begin{equation*}
						b \rightarrow t_1 \rightarrow t_2 \rightarrow \cdots \rightarrow t_k \rightarrow y, 
						\end{equation*}
						where $ t_1 , t_2 , \cdots , t_k \in H.$ So, there exists a path from $(a,b)$ to $(a,y) $ in $(G \Box H) \setminus T$ denoted by
						\begin{equation*}
						(a,b) \rightarrow (a,t_1) \rightarrow (a,t_2) \rightarrow \cdots \rightarrow (a,t_k) \rightarrow (a,y).     
						\end{equation*}				 	
							
						Therefore, there exists a path from $(a,b)$ to $(x,y)$ in $(G \Box H) \setminus T$  which passes through the vertex $(a,y)$.

						So, the induced graph $V(G \Box H) \setminus T$ is connected.
					and hence, $T$ 
					is an outer-connected dominating set for $G \Box H$. 
						\end{proof}
				
				\end{proof}

	The following tight example shows that the bound given in Theorem \ref{cart} is sharp.

	\begin{example}
		Let $G$ and $H$ be two graphs with $V(G) =\{a,b,c\}$ , $V(H) =\{x,y\}$ , 
		$E(G)=\{\{a,b\},\{a,c\},\{b,c\}\}$ and $E(H)=\{\{x,y\}\}$.  Then,  we have
		\begin{equation}
		V(G \Box H) = \{(a,x) , (a,y) , (b,x) , (b,y) , (c,x) , (c,y)\},
		\end{equation} 
		\begin{align}
		\begin{aligned}
			E(G \circ H) = \{\{(a,x),(a,y)\}, \{(b,x),(b,y)\} , \{(c,x),(c,y)\} 
				\\
				, \{(a,x),(b,x)\} , \{(a,x),(c,x)\} , \{(a,y),(b,y)\}					 
			\\
				, \{(a,y),(c,y)\} , \{(b,x),(c,x)\} , \{(b,y),(c,y)\}\}.
		\end{aligned}
		\end{align}
	    The outer-connected dominating set for $G$ is $\{c\}$ and 	 the outer-connected dominating set for $G \Box H$ is $\{(c,x) ,(c,y)\}$. So,
	    	\begin{equation}
	    	\tilde{\gamma_c}(G \Box H) = \tilde{\gamma_c}(G) * |V(H)| = 1 * 2 = 2.
	    	\end{equation}		
	    		
	\end{example}
	
	\section{Outer Connected Domination in the Direct
		Product of Two Graphs}

	Gasper Mekis\cite{mekivs2010lower} gave a lower bound for the domination number of a direct product and proved the following result.
	\begin{theorem}\cite{mekivs2010lower}
		\label{tmek}
		Let $G = \times_{i=1}^t K_{n_i}$ , where $t \ge 3$ and $n_i \ge 2$ for all $i$. Then
		$\gamma(G) \ge t + 1.$
	\end{theorem}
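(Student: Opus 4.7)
The strategy is to prove the contrapositive: any subset $D=\{u^{(1)},\dots,u^{(s)}\}\subseteq V(G)$ with $s\le t$ fails to dominate $G$, so $\gamma(G)\ge t+1$. The key observation for $\times_{i=1}^{t}K_{n_i}$ is that two vertices are adjacent precisely when they disagree in every coordinate. Hence $v$ is \emph{not} dominated by $u^{(j)}$ iff $v\ne u^{(j)}$ and $v$ agrees with $u^{(j)}$ in at least one coordinate. Finding an undominated $v$ outside $D$ therefore reduces to choosing, for each $u^{(j)}$, a coordinate $c(j)\in\{1,\dots,t\}$, setting $v_{c(j)}=u^{(j)}_{c(j)}$ consistently, and then ensuring $v\notin D$.

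Since $s\le t$, the natural first attempt is an \emph{injective} $c$, in which case the matching requirements create no conflict and the $t-s$ coordinates outside the image of $c$ are free. When $s<t$ this leaves at least one free coordinate; combining its $\ge 2$ completions (using $n_i\ge 2$) with the several injective choices of $c$ produces many more candidate vertices than the $|D|\le t-1$ forbidden ones, so at least one candidate lies outside $D$ and is undominated by construction.

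The delicate case is $s=t$: now $c$ must be a permutation $\sigma\in S_t$, the candidate $v(\sigma):=\bigl(u^{(\sigma(i))}_i\bigr)_{i=1}^{t}$ is fully determined, and one still needs some $\sigma$ with $v(\sigma)\notin D$. My plan is by contradiction. Assume every $v(\sigma)\in D$; since $|S_t|=t!>t=|D|$ for $t\ge 3$, pigeonhole yields some $u^{(k)}$ hit by at least $(t-1)!$ permutations, which in turn forces systematic column-wise coincidences $u^{(j)}_i=u^{(k)}_i$ in the matrix with rows $u^{(1)},\dots,u^{(t)}$. These coincidences guarantee that some coordinate $i^{*}$ admits a subset $J\subseteq\{1,\dots,t\}$ with $|J|\ge 2$ of rows all sharing a common value at position $i^{*}$, so a \emph{non-injective} coloring with $c(j)=i^{*}$ for every $j\in J$ is legal; completing $c$ on the remaining rows (again exploiting the forced coincidences) frees up at least one coordinate of $v$, whose $\ge 2$ completions cannot all collapse into $D$, producing a $v\notin D$ that is still undominated.

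The main obstacle is the bookkeeping in this last step: extracting from "all $t!$ permutation-diagonals land in a $t$-element $D$" a concretely usable non-injective coloring, and verifying that the freed coordinate suffices to escape $D$ once $t\ge 3$. The hypothesis $t\ge 3$ is essential, since for $t=2$ one has $K_2\times K_2\cong C_4$ with $\gamma=2<t+1$, showing that the permutation/pigeonhole mechanism genuinely needs $t!>t$.
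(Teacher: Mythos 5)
First, a remark on the comparison itself: the paper does not prove this statement --- it is quoted from Meki\v{s} \cite{mekivs2010lower} --- so there is no in-paper argument to measure you against, and your proposal has to stand on its own. Your framework is the right one: in $\times_{i=1}^{t}K_{n_i}$ adjacency is disagreement in every coordinate, so it suffices to show that for every $D=\{u^{(1)},\dots,u^{(s)}\}$ with $s\le t$ there is a vertex $v\notin D$ agreeing with each $u^{(j)}$ in at least one coordinate, and the ``assign a coordinate $c(j)$ to each row'' formalism is exactly the standard way to build such a $v$. Your observation that $t\ge 3$ is genuinely needed (via $K_2\times K_2\cong C_4$) is also correct.

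There is, however, a genuine gap at the step you yourself flag as ``bookkeeping'': every time you need a candidate to escape $D$, you appeal to a counting claim that is false or unjustified as stated. For $s<t$ you say the free coordinates produce ``many more candidate vertices than the $|D|\le t-1$ forbidden ones,'' but a single injective $c$ yields only $\prod_{i\notin\operatorname{im}(c)}n_i$ candidates, which can equal $2$ while $|D|$ can be as large as $t-1$; concretely, in $K_2\times K_2\times K_2$ with $D=\{(0,0,0),(0,0,1)\}$ and $c$ the identity on $\{1,2\}$, both candidates $(0,0,\ast)$ lie in $D$. So one must vary $c$, and you give no argument that some choice of $c$ always succeeds. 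The same unproved assertion reappears at the crux of the $s=t$ case: after the non-injective coloring frees one coordinate, you get only $\ge 2$ completions, which can perfectly well all ``collapse into'' a set of size $t\ge 3$, so ``cannot all collapse into $D$'' does not follow from cardinality. (Your pigeonhole step is sound as far as it goes --- even a single $\sigma$ with $v(\sigma)=u^{(k)}$ forces every row to agree with row $k$ in some column --- but converting those coincidences into a candidate that provably avoids $D$ is the entire content of the theorem, and it requires a structural case analysis of the agreement pattern $u^{(j)}_i=u^{(k)}_i$ that is not present here.) The theorem is true and an argument in this spirit can be completed, but as written the proposal reduces the statement to an equally hard combinatorial claim rather than proving it.
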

	
	From Theorem \ref{tmek}  and that $\tilde{\gamma_c}(G) \ge \gamma(G)$, we obtain the following result.
	\begin{corollary}
		\label{direct}
			Let $G = \times_{i=1}^t K_{n_i}$ , where $t \ge 3$ and $n_i \ge 2$ for all $i$. Then
			$\tilde{\gamma_c}(G) \ge t + 1.$
			
	\end{corollary}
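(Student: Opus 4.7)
The plan is to reduce the corollary to Theorem \ref{tmek} via the general inequality $\gamma(G) \le \tilde{\gamma_c}(G)$, which holds for every graph $G$. First I would verify this general inequality by inspecting the definition of an outer-connected dominating set: any such set $\tilde{D} \subseteq V(G)$ is, in particular, required to be a dominating set. Hence every outer-connected dominating set is feasible for the ordinary domination problem, and in particular a minimum outer-connected dominating set has cardinality at least $\gamma(G)$, giving $\gamma(G) \le \tilde{\gamma_c}(G)$.

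With this inequality at hand, the conclusion is essentially a one-line substitution. I would apply Theorem \ref{tmek} to $G = \times_{i=1}^t K_{n_i}$, whose hypotheses ($t \ge 3$ and $n_i \ge 2$ for all $i$) are exactly those assumed in the corollary, obtaining $\gamma(G) \ge t + 1$. Chaining the two inequalities yields $\tilde{\gamma_c}(G) \ge \gamma(G) \ge t + 1$, which is the desired bound. There is no genuine obstacle here: the proof is a direct combination of a definitional inequality with an existing lower bound from the literature, and the only point to note is that the assumptions on $t$ and the $n_i$ match precisely the hypotheses needed to invoke Mekis's theorem.
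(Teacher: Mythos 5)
Your proposal is correct and follows exactly the paper's own route: the paper likewise notes that $\tilde{\gamma_c}(G) \ge \gamma(G)$ (since every outer-connected dominating set is in particular a dominating set) and then invokes Theorem \ref{tmek} to conclude $\tilde{\gamma_c}(G) \ge \gamma(G) \ge t+1$. Nothing further is needed.
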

	
	The bound given in Theorem \ref{tmek} is sharp\cite{mekivs2010lower} and it also remains sharp for the outer-connected domination number.

	\begin{theorem}
	 Let $G = \times_{i=1}^t K_{n_i}$, where $t \ge 3$ and $n_i \ge t + 1$ for all $i$. Then
	$\tilde{\gamma_c}(G) = t + 1.$
	\end{theorem}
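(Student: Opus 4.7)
The lower bound $\tilde{\gamma_c}(G) \ge t+1$ is already delivered by Corollary \ref{direct}, so the plan is to construct an outer-connected dominating set of size exactly $t+1$. Labelling $V(K_{n_i}) = \{1, 2, \ldots, n_i\}$, which is legitimate because $n_i \ge t+1$, I would take $v_j = (j, j, \ldots, j)$ for $j = 1, \ldots, t+1$ and set $D = \{v_1, \ldots, v_{t+1}\}$. Recall that in the direct product of complete graphs, two vertices $a$ and $b$ are adjacent precisely when $a_i \ne b_i$ for every coordinate $i$.

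First I would verify that $D$ dominates $G$. For any $b = (b_1, \ldots, b_t)$ outside $D$, the set $\{b_1, \ldots, b_t\}$ contains at most $t$ distinct values, whereas $\{1, \ldots, t+1\}$ has $t+1$ elements. Thus some $j$ is missed by every $b_i$, which makes $b$ adjacent to $v_j$ in $G$.

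Next I would show that $G \setminus D$ is connected, in fact of diameter at most $2$. Given distinct $u, w \in V(G) \setminus D$, I would construct a common neighbour $z \in V(G) \setminus D$. Adjacency to both $u$ and $w$ requires $z_i \notin \{u_i, w_i\}$ for every $i$, and remaining outside $D$ is ensured by forcing $z$ to be non-constant. Concretely, pick $z_1 \in \{1, \ldots, n_1\} \setminus \{u_1, w_1\}$ (possible because $n_1 - 2 \ge t - 1 \ge 2$), then $z_2 \in \{1, \ldots, n_2\} \setminus \{u_2, w_2, z_1\}$ (possible because $n_2 - 3 \ge t - 2 \ge 1$), so that $z_1 \ne z_2$ and hence $z \notin D$, and finally $z_i \notin \{u_i, w_i\}$ freely for $i \ge 3$. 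The path $u$--$z$--$w$ then lies entirely in $G \setminus D$.

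The only real obstacle is the counting at the step that produces $z$: the hypothesis $n_i \ge t+1$ with $t \ge 3$ is exactly what makes the inequalities $n_1 - 2 \ge 2$ and $n_2 - 3 \ge 1$ hold, so this is where the hypothesis is used essentially. Combining the two steps yields $\tilde{\gamma_c}(G) \le t+1$, which together with Corollary \ref{direct} gives the equality $\tilde{\gamma_c}(G) = t+1$.
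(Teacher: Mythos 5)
Your proposal is correct and follows essentially the same route as the paper: the same set of $t+1$ ``constant'' vertices, the same pigeonhole argument for domination, and connectivity via a common neighbour of any two outside vertices chosen coordinate-by-coordinate to avoid $u_i$, $w_i$ and to be non-constant. Your write-up of the counting step for the common neighbour is in fact more explicit than the paper's.
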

	\begin{proof}
		
   Let $\tilde{D} = \{(0, 0, \dots, 0), (1, 1, \dots, 1), \dots , (t, t, \dots , t)\}$ be an outer-connected dominatin set for $G$.
	Suppose that $ u = (u_1,u_2,\dots,u_t) \in V(G) \setminus \tilde{D} $ and $u$ is not adjacent to any of the vertices from $\tilde{D}$, in which case $u$ must agree
	in at least one coordinate with every vertex from $\tilde{D}$. Hence each of the $t + 1$ elements from $\{0, 1, \dots, t\}$ must appear on
	some coordinate of $u$, which is not possible as $u$ has only $t$ coordinates available. 
	Now, it suffices to show that $ V(G) \setminus \tilde{D}$ induces a connected graph.

Suppose that $u = (u_1,u_2,\dots,u_t), v = (v_1,v_2,\dots,v_t)$ are two arbitrary vertices in $ V(G) \setminus \tilde{D}$, we claim there exists a vertex 
	$x = (x_1,x_2,\dots,x_t)$ such that the path from $u$ to $v$ in $ V(G) \setminus \tilde{D}$ passes through it. Let $p_i$ be $i^{th}$ coordinate of the vertex $p$. the coordinate $x_i$ have to be three properties: (1) $x_i \ne u_i$, (2)  $x_i \ne v_i$, and  $x_i \ne x_{i-1}$ or $x_i \ne x_{i+1}$. So, if there exist at least four vertices in  all $K_{n_i}$ the claim is true which completes the proof. This end is clear since $n_i \ge t + 1$ and $t \ge 3$. 	
\end{proof}


	\end {document}